\documentclass[submission,copyright,creativecommons]{eptcs}
 % Name of the event you are submitting to

\usepackage{amsthm}
\usepackage{amsmath}
\usepackage{amssymb}
\usepackage{amsfonts}
\usepackage{graphicx}
\usepackage{verbatim}
\usepackage{hyperref}

\newtheorem{theorem}{Theorem}
\newtheorem{lemma}{Lemma}
\newtheorem{claim}{Claim}
\newtheorem{corollary}{Corollary}
\theoremstyle{definition}
\newtheorem{definition}{Definition}

\title{Sweeping Permutation Automata}

\author{Maria Radionova
\institute{Department of Mathematics and Computer Science \\
St.\ Petersburg State University \\
Saint Petersburg, Russia}
\email{radmarale@gmail.com}
\and
Alexander Okhotin
\institute{Department of Mathematics and Computer Science \\
St.\ Petersburg State University \\
Saint Petersburg, Russia}
\email{\quad alexander.okhotin@spbu.ru}
}

\begin{document}
\maketitle

\begin{abstract}
This paper introduces sweeping permutation automata,
which move over an input string in alternating left-to-right and right-to-left sweeps
and have a bijective transition function.
It is proved that these automata recognize the same family of languages
as the classical one-way permutation automata
(Thierrin, ``Permutation automata'', \emph{Mathematical Systems Theory}, 1968).
An $n$-state two-way permutation automaton is transformed
to a one-way permutation automaton with 
$F(n)=\max_{k+\ell=n, m \leqslant \ell}
k \cdot { \ell \choose m} \cdot {k - 1 \choose \ell - m} \cdot (\ell - m)!$ states.
This number of states is proved to be necessary in the worst case,
and its growth rate is estimated as
$F(n) = n^{\frac{n}{2} - \frac{1 + \ln 2}{2}\frac{n}{\ln n}(1 + o(1))}$.
%
% Keyword: permutation automata, two-way automata, sweeping automata, group languages, reversible automata, state complexity
\end{abstract}

\sloppy

\section{Introduction}

\emph{Permutation automata}, introduced by Thierrin~\cite{Thierrin},
are one-way deterministic finite automata,
in which the transition function by each symbol forms a permutation of the set of states.
They recognize a proper subfamily of regular languages:
for instance, no finite language is recognized by any permutation automaton.
The language family recognized by permutation automata
is known as the \emph{group languages},
because their syntactic monoid is a group,
and it has received some attention
in the literature on algebraic automata theory~\cite{MargolisPin}.
Recently, Hospod\'ar and Mlyn\'ar\v{c}ik~\cite{HospodarMlynarcik}
determined the state complexity of operations on these automata,
while Rauch and Holzer~\cite{RauchHolzer}
investigated the effect of operations on permutation automata
on the number of accepting states.

Permutation automata are reversible,
in the sense that,
indeed, knowing the current state and the last read symbol
one can always reconstruct the state at the previous step.
The more general \emph{reversible automata},
studied by Angluin~\cite{Angluin} and by Pin~\cite{Pin1987},
additionally allow undefined transitions,
so that the transition function by each symbol is injective.
Reversible automata still cannot recognize all regular languages~\cite{Pin1992},
but since they can recognize all finite languages,
they are a more powerful model than permutation automata.

The notion of reversible computation has also been studied
for two-way finite automata.
In general, a two-way automaton (2DFA)
operates on a string delimited by a left end-marker ($\vdash$)
and a right end-marker ($\dashv$),
and may move its head to the left or to the right in any transition.
For the reversible subclass of two-way finite automata (2RFA),
Kondacs and Watrous~\cite{KondacsWatrous}
proved that 2RFA can recognize every regular language.
Later, Kunc and Okhotin~\cite[Sect.~8.1]{KuncOkhotin_reversible}
showed that every regular language can still be recognized
by 2RFA with no undefined transitions on symbols of the alphabet,
and with injective functions on the end-markers.
But since the latter automata,
in spite of having some kind of bijections in their transition functions,
recognize all regular languages,
they are no longer a model for the group languages.
And there seems to be no reasonable way
to have 2RFA act bijectively on both end-markers,
because in this case it would be impossible
to define both an accepting and a rejecting state.

Can permutation automata have any two-way generalization at all?
This paper gives a positive answer
by investigating \emph{sweeping permutation automata}.
This new model is a subclass of sweeping automata,
that is, two-way automata that may turn only at the end-markers.
In a sweeping automaton, there are left-moving and right-moving states,
and any transition in a right-moving state by any symbol other than an end-marker
must move the head to the right and lead to another right-moving state
(same for left-moving states).
The transitions by each symbol other than an end-marker
form two functions, one acting on the right-moving states,
and the other on the left-moving states.
In the proposed sweeping permutation automata, both functions must be bijections,
whereas the transition functions at the end-markers must be injective.
A formal definition of the new model is given in Section~\ref{section_definition}.

The main motivation for the study of sweeping permutation automata (2PerFA)
is that these automata recognize the same family of languages
as the classical one-way permutation automata (1PerFA).
This result is established in Section~\ref{section_transformation}
by showing that if the optimal transformation of two-way automata to one-way,
as defined by Kapoutsis~\cite{Kapoutsis,Kapoutsis_thesis},
is carefully applied to a 2PerFA, then it always produces a 1PerFA.

The next question studied in this paper is the number of states in a 1PerFA
needed to simulate an $n$-state 2PerFA.
The number of states used in the transformation in Section~\ref{section_transformation}
depends on the partition of $n$ states of the 2PerFA into $k$ right-moving and $\ell$ left-moving states,
and also on the number $m$ of left-moving states
in which there are no transitions by the left end-marker ($\vdash$).
The resulting 1PerFA has
$k \cdot { \ell \choose m} \cdot {k - 1 \choose \ell - m} \cdot (\ell - m)!$ states.
A matching lower bound
for each triple $(k, \ell, m)$, with $k>\ell$ and $m>0$,
is established in Section~\ref{section_lower_bound},
where it is proved that there exists a 2PerFA with $k$ right-moving states
and $\ell$ left-moving states, and with the given value of $m$,
such that every one-way deterministic automaton (1DFA)
recognizing the same language
must have at least
$k \cdot { \ell \choose m} \cdot {k - 1 \choose \ell - m} \cdot (\ell - m)!$ states.

The desired state complexity of transforming two-way permutation automata to one-way
should give the number of states in a 1PerFA that is sufficient and in the worst case necessary
to simulate every 2PerFA with $n$ states.
Note that the minimal 1DFA for a group language
is always a 1PerFA~\cite{HospodarMlynarcik},
and hence this is the same state complexity tradeoff as from 2PerFA to 1DFA.
The following function gives an upper bound on this state complexity.
\begin{equation*}
	F(n)
		=
	\max_{\substack{k+\ell=n \\ \: m \leqslant \ell}}
	k \cdot { \ell \choose m} \cdot {k - 1 \choose \ell - m} \cdot (\ell - m)!
\end{equation*}

This bound is proved to be precise in Section~\ref{section_F_n},
where it is shown that the maximum in the formula is reached for
$k = \lfloor \frac{n + 2}{2}\rfloor$
and $\ell = \lceil \frac{n - 2}{2} \rceil$,
that is, for $k>\ell$.
Since Section~\ref{section_lower_bound}
provides witness languages
for these values of $k$ and $\ell$
that require
$k \cdot { \ell \choose m} \cdot {k - 1 \choose \ell - m} \cdot (\ell - m)!$ states
in every 1PerFA,
this gives a lower bound $F(n)$.
Finally, the growth rate of the function $F(n)$ is estimated
as $F(n) = n^{\frac{n}{2} - \frac{1 + \ln 2}{2}\frac{n}{\ln n}(1 + o(1))}$
using Stirling's approximation.

An alternative, more general definition of sweeping permutation automata,
which allows acceptance both at the left end-marker and at the right end-marker,
is presented in Section~\ref{section_more_general}.
A proof that they still can be transformed to 1PerFA is sketched,
but the generalized transformation uses more states.

\section{Definition}\label{section_definition}

A one-way permutation automaton (1PerFA) is a one-way deterministic finite automaton (1DFA)
in which the transition function by every symbol is a bijection~\cite{Thierrin}.

This restriction is adapted to the more general \emph{sweeping automata}~\cite{Sipser_sweeping},
in which the set of states is divided into disjoint classes
of right-moving ($Q_+$) and left-moving ($Q_-$) states,
so that the automaton may turn only at the end-markers.
In the proposed \emph{sweeping permutation automata},
the transition function by each symbol forms one left-to-right bijection
and another right-to-left bijection.
Transitions at the end-markers are injective partial functions.

\begin{figure}[t]
	\centerline{\includegraphics[scale=0.5]{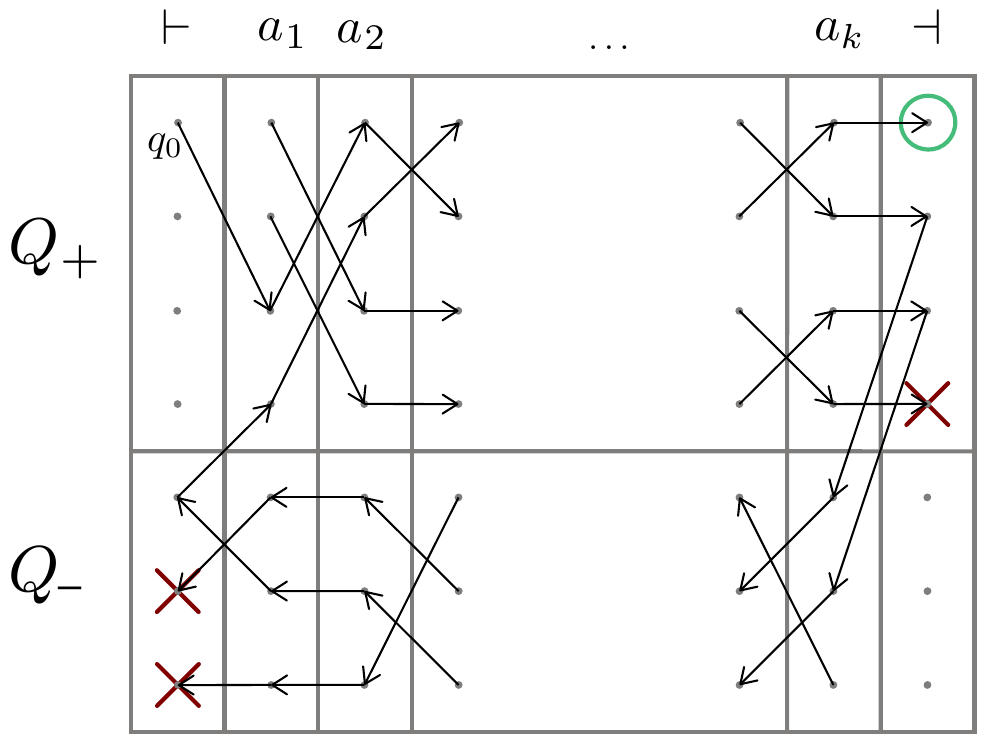}}
	\caption{Transitions of a 2PerFA on an input string.}
	\label{f:definition}
\end{figure}

\begin{definition}
A sweeping permutation automaton (2PerFA) is a 9-tuple
$\mathcal{A}=(\Sigma, Q_+, Q_-, q_0, \langle \delta_a^+ \rangle_{a \in \Sigma}, \langle \delta_a^- \rangle_{a \in \Sigma}, \delta_{\vdash}, \delta_{\dashv}, F)$, where
\begin{itemize}
\item
	$\Sigma$ is the alphabet;
\item
	$Q_+ \cup Q_-$ is the set of states, where $Q_+ \cap Q_- = \varnothing$;
\item
	$q_0 \in Q_+$ is the initial state;
\item
	for each symbol $a \in \Sigma$, $\delta_a^+ \colon Q_+ \to Q_+$ and $\delta_a^- \colon Q_- \to Q_-$ are bijective transition functions;
\item
	the transition functions at the end-markers
	$\delta_{\vdash} \colon (Q_- \cup \{q_0\}) \to Q_+$,
	$\delta_{\dashv} \colon Q_+ \to Q_-$
	are partially defined and injective on their respective domains;
\item
	$F \subseteq Q_+$ is the set of accepting states,
	with $\delta_{\dashv}(q)$ undefined for all $q \in F$.
\end{itemize}
The computation of the automaton is defined in the same way
as for sweeping automata of the general form.
Given an input string $w = a_1 \ldots a_m \in \Sigma^*$,
the automaton operates on a tape ${\vdash} a_1 \ldots a_m {\dashv}$.
Its computation is a uniquely defined sequence of configurations,
which are pairs $(q, i)$
of a current state $q \in Q_+ \cup Q_-$ and a position $i \in \{0, 1, \ldots, m+1\}$ on the tape.
It starts in the configuration $(q_0, 0)$
and makes a transition to $(\delta_{\vdash}(q_0), 1)$.
If the automaton is in a configuration $(q, i)$ with $q \in Q_+$ and $i \in \{1, \ldots, m\}$,
it moves to the next configuration $(\delta_{a_i}^+(q), i+1)$.
Once the automaton is in a configuration $(q, m+1)$,
it accepts if $q \in F$,
or moves to $(\delta_{\dashv}(q), m)$ if $\delta_{\dashv}(q)$ is defined,
and rejects otherwise.
In a configuration $(q, i)$ with $q \in Q_-$ and $i \in \{1, \ldots, m\}$,
the automaton moves to $(\delta_{a_i}^-(q), i-1)$.
Finally, in a configuration $(q, 0)$ with $q \in Q_-$,
the automaton turns back to $(\delta_{\vdash}(q), 1)$
or rejects if this transition is undefined.

The language recognized by an automaton $\mathcal{A}$,
denoted by $L(\mathcal{A})$, is the set of all strings it accepts.
\end{definition}

A one-way permutation automaton (1PerFA) is a 2PerFA in which $Q_- = \varnothing$ and $\delta_{\dashv}$ is undefined on every state. The left end-marker can be removed, making $\delta_{\vdash}(q_0)$ the new initial state.

Note that a 2PerFA never loops.
If it did, then some configuration would appear twice in some computation.
Consider the earliest such configuration.
If it is not the initial configuration,
then there exists only one possible previous configuration.
It appears at least twice in the computation,
and it precedes the configuration considered before, a contradiction.
The repeated configuration cannot be the initial configuration,
in which the 2PerFA is at the left end-marker in the state $q_0 \in Q_+$,
because the automaton may return to $\vdash$ only in the states from $Q_-$.

\section{Transformation to one-way}\label{section_transformation}

Since a 2PerFA is a 2DFA,
the well-known transformation to a one-way automaton can be applied to it~\cite{Kapoutsis, Shepherdson}:
after reading a prefix of a string $u$,
the 1DFA stores the first state
in which the 2PerFA eventually goes right from the last symbol of the prefix,
and the function which encodes the outcomes of all computations
starting at the last symbol of the prefix
and ending with the transition from that symbol to the right.
For a sweeping automaton, all computations encoded by the functions
start in $Q_-$ and end in $Q_+$.
Moreover, computations starting in different states should end in different states.
Therefore, a one-way automaton has to remember fewer different functions of a simpler form,
and eventually turns out to be a permutation automaton.

\begin{lemma}\label{2PerFA_to_1PerFA_lemma}
	For every 2PerFA $\mathcal{A} = (\Sigma, Q_+, Q_-, q_0, \langle \delta_a^+ \rangle_{a \in \Sigma}, \langle \delta_a^- \rangle_{a \in \Sigma}, \delta_{\vdash}, \delta_{\dashv}, F)$ with
	\begin{equation*}
		|Q_+| = k, \quad |Q_-| = \ell, \quad |Q_-^{\times}| = m,
	\end{equation*}
	where $Q_-^{\times} \subseteq Q_-$~is the set of states from which there is no transition by $\vdash$, there exists a 1PerFA recognizing the same language which uses states of the form $(q, f)$ satisfying the following restrictions:
	\begin{itemize}
		\item $q \in Q_+$,
		\item $f \colon Q_- \to Q_+$ is a partially defined function,
		\item $q \notin \text{ Im} f$,
		\item $f$ is injective,
		\item $f$ is undefined on exactly $m$ states.
	\end{itemize}
\end{lemma}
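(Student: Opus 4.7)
The plan is to apply the standard Shepherdson--Kapoutsis transformation of a 2DFA into an equivalent 1DFA and to check that, starting from a 2PerFA, it produces a 1PerFA of the required form. After reading a prefix $u$, the constructed 1DFA will be in a state $(q,f)$ where $q \in Q_+$ is the state in which $\mathcal{A}$ first crosses from position $|u|$ to position $|u|+1$ on any continuation of $u$, and the partial function $f \colon Q_- \to Q_+$ records, for every $p \in Q_-$, the state $f(p)$ in which $\mathcal{A}$ subsequently re-crosses from $|u|$ to $|u|+1$ after being placed at position $|u|$ in state $p$ (undefined when that excursion gets stuck at $\vdash$). The initial state will be $(\delta_{\vdash}(q_0),\, \delta_{\vdash}|_{Q_-})$; the transition by a symbol $a$ will send $(q,f)$ to $(\delta_a^+(q),g)$ with $g(p) = \delta_a^+(f(\delta_a^-(p)))$; and acceptance of $(q,f)$ will be decided by iterating the loop $q \mapsto f(\delta_{\dashv}(q))$ until either some state in $F$ is hit or the iteration breaks, which must happen in finitely many steps since $\mathcal{A}$ never loops.

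I next verify the conditions on $(q,f)$. The typing conditions $q \in Q_+$ and $f \colon Q_- \to Q_+$ are immediate from the construction. The excursion defining $f$ consists of a full left sweep through $u$ (a bijection on $Q_-$, being a composition of the $\delta_{a_i}^-$), a bounce via the injective partial function $\delta_\vdash$, and a full right sweep (a bijection on $Q_+$). Composing these immediately gives that $f$ is injective and that it is undefined on exactly $m$ states of $Q_-$, because the left sweep permutes $Q_-$ and so exactly $m$ starting states land in $Q_-^{\times}$ at the moment of the bounce. The only nontrivial condition is $q \notin \mathrm{Im}(f)$, and this is the main obstacle. I plan to handle it by reverse determinism: since all sweep transitions are bijections and $\delta_\vdash, \delta_\dashv$ are injective, every configuration has at most one predecessor. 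The main computation reaches $(q,|u|+1)$ from $(q_0,0)$ in $|u|+1$ transitions, while an excursion witnessing $f(p)=q$ reaches the very same configuration from $(p,|u|)$ in $2|u|+1$ transitions; stepping backward simultaneously from the common endpoint forces the two trajectories to coincide for as many steps as both remain defined, and aligning the $(|u|+1)$-th predecessor identifies $(q_0,0)$ with a configuration $(p',0)$ reached at the bottom of the left sweep. But $q_0 \in Q_+$ and $p' \in Q_-$, a contradiction.

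Finally, to see that the resulting 1DFA is a 1PerFA, I check that on the set $S$ of admissible pairs $(q,f)$ the transition by each symbol $a$ is a bijection. Since $\delta_a^+$ and $\delta_a^-$ are bijections, the formula $g(p) = \delta_a^+(f(\delta_a^-(p)))$ preserves the size of the domain, injectivity, and the equivalence $q \notin \mathrm{Im}(f) \Longleftrightarrow \delta_a^+(q) \notin \mathrm{Im}(g)$, so the map $(q,f) \mapsto (\delta_a^+(q), g)$ sends $S$ into $S$ and has explicit inverse $f(r) = (\delta_a^+)^{-1}(g((\delta_a^-)^{-1}(r)))$. Language equivalence $L(\mathcal{A}) = L(\text{1PerFA})$ is inherited from the standard correctness of the Shepherdson--Kapoutsis construction, so nothing further is needed beyond confirming that the acceptance rule correctly simulates the bouncing of $\mathcal{A}$ at $\dashv$ after the last symbol has been read.
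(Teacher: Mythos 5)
Your construction is exactly the paper's: the same state set of pairs $(q,f)$, the same initial state $(\delta_{\vdash}(q_0),\delta_{\vdash}|_{Q_-})$, the same transition $(q,f)\mapsto(\delta_a^+(q),\,\delta_a^+\circ f\circ\delta_a^-)$, and the same acceptance rule by iterating $q\mapsto f(\delta_{\dashv}(q))$. Where you differ is in how the invariants on $(q,f)$ are verified. The paper proves all five conditions by induction on the length of the prefix: the base case follows from the injectivity of $\delta_{\vdash}$ and the disjointness of $\{q_0\}$ and $Q_-$ in its domain, and the inductive step observes that a bijection $\delta_a^+$ maps the disjoint sets $\{q\}$ and $\mathrm{Im}\,f$ to disjoint sets. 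You instead argue semantically: $f$ is the composite (right sweep) $\circ$ $\delta_{\vdash}$ $\circ$ (left sweep), which immediately gives injectivity and the domain size $\ell-m$, and you obtain $q\notin\mathrm{Im}\,f$ by reverse determinism of configurations, aligning the $(|u|{+}1)$-th predecessors of the common crossing configuration and deriving $q_0\in Q_-$, a contradiction. Both arguments are sound; yours is more conceptual and makes visible \emph{why} the invariants hold (the excursion must bounce at $\vdash$ in a state of $Q_-$, which $q_0$ is not), while the paper's induction is shorter and has the advantage of establishing the invariants for every state of the constructed automaton rather than only for reachable ones --- a point that matters slightly for your termination argument for the acceptance sequence, which you justify by ``$\mathcal{A}$ never loops'' and hence covers only reachable pairs; for unreachable pairs in $S$ one should either fall back on the paper's combinatorial argument (injectivity of $f\circ\delta_{\dashv}$ together with $q\notin\mathrm{Im}\,f$ precludes repetition) or simply declare such states rejecting. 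This is a cosmetic issue, not a gap.
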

\begin{proof}
We will construct a 1PerFA $\mathcal{B} = ( \Sigma, Q, \widetilde{q}_0, \widetilde{\delta}, \widetilde{F} )$;
states of $\mathcal{B}$ shall be pairs $(q, f)$,
where $q \in Q_+$ and $f \colon Q_- \to Q_+$ is a partial function.

After reading a prefix $s \in \Sigma^*$,
the automaton $\mathcal{B}$ should come to a state $(q, f)$,
where $q$ and $f$ describe the outcomes of the following computations of $\mathcal{A}$ on $s$:
\begin{itemize}
\item
	if the 2PerFA starts on ${\vdash}s$ in its initial configuration,
	then it eventually moves from the last symbol of ${\vdash}s$ to the right in the state $q$,
\item
	if the 2PerFA starts at the last symbol of ${\vdash}s$ in a state $p$ from $Q_-$,
	then it eventually leaves $s$ in the state $f(p) \in Q_+$.
	If the computation reaches an undefined transition at $\vdash$, then the value $f(p)$ is undefined.
\end{itemize}

The initial state is defined as
\begin{equation*}
\widetilde{q_0} = (\delta_{\vdash}(q_0), \delta_{\vdash} \big|_{Q_-})
\end{equation*}
where $\delta_{\vdash} \big|_{Q_-}$ is $\delta_{\vdash}$ restricted to the domain $Q_-$.

The definition of the transition function is as follows:
\begin{align*}
	\widetilde{\delta}_a((q, f))
		=
	(\delta_a^+(q), \delta_a^+ \circ f \circ \delta_a^-),
	&& \text{for all } a \in \Sigma.
\end{align*}

\begin{claim}\label{reachable_states_claim}
Every pair $(q, f)$ reachable from $\widetilde{q}_0$ by transitions in $\widetilde{\delta}$ satisfies the following conditions:
\begin{itemize}
	\item $q \in Q_+$,
	\item $f \colon Q_- \to Q_+$ is a partially defined function,
	\item $q \notin \text{ Im} f$,
	\item $f$ is injective,
	\item $f$ is undefined on exactly $m$ states.
\end{itemize}
\end{claim}
\begin{proof}
	Induction on the length of the string.
	Let $(q, f)$ be reachable in $\mathcal{B}$ by a string $u$. If $u = \varepsilon$, then $(q, f)$ is the initial state
	\begin{equation*}
		(q, f) = (\delta_{\vdash}(q_0), \delta_{\vdash} \big|_{Q_-})
	\end{equation*}
	The first two conditions are satisfied
	because the domain of $\delta_{\vdash}$ is split into $q_0$ and $Q_-$.
	The third and the fourth conditions follow from the injectivity of $\delta_{\vdash}$
	and the disjointness of $\{ q_0 \}$ and $Q_-$.
	The states on which $\delta_{\vdash} \big|_{Q_-}$
	is not defined are the states in $Q_-^{\times}$ by definition, and there are $m$ of them.
	
	Let $(q, f)$ be reachable in $\mathcal{B}$ by a string $u$ and let $(q', f')$ be reachable from it by a transition by $a$. The induction assumption is true for the state $(q, f)$, and $(q', f')$ is defined as
	\begin{equation*}
		(q', f') = (\delta_a^+(q), \delta_a^+ \circ f \circ \delta_a^-)
	\end{equation*}
	The state $q' \in Q_+$ because $\delta_a^+$ is a total function which acts from $Q_+$ to $Q_+$. The function $f'$ acts from $Q_-$ to $Q_+$ because $\delta_a^-$ acts from $Q_-$ and $\delta_a^+$ acts to $Q_+$. To see that $\delta_a^+(q) \notin \text{Im } \delta_a^+ \circ f \circ \delta_a^-$, consider that $\{ q\}$ and $\text{Im } f$ are disjoint by the induction assumption, and therefore their images under a bijection $\delta_a^+$, that are, $\{ \delta_a^+(q) \}$ and $\text{Im } \delta_a^+ \circ f$, are disjoint as well.
	
	The function $\delta_a^+ \circ f \circ \delta_a^-$ is injective as a composition of injective functions. The function $\delta_a^+ \circ f \circ \delta_a^-$ is undefined on exactly $m$ states because $f$ is, and functions $\delta_a^+$ and $\delta_a^-$ are total bijections.
\end{proof}
Let $Q$ be the set of all pairs $(q, f)$ satisfying Claim~\ref{reachable_states_claim}.

\begin{claim}
After reading a prefix $s \in \Sigma^*$ the automaton $\mathcal{B}$ comes to a state $(q, f)$, where
\begin{itemize}
\item
	if the 2PerFA starts on ${\vdash}s$ in its initial configuration,
	then it eventually moves from the last symbol of ${\vdash}s$ to the right in the state $q$,
\item
	if the 2PerFA starts at the last symbol of ${\vdash}s$ in a state $p$ from $Q_-$,
	then it eventually leaves $s$ in the state $f(p) \in Q_+$.
	If the computation reaches an undefined transition at $\vdash$, then the value $f(p)$ is undefined.
\end{itemize}
\end{claim}
\begin{proof}
Induction on the length of the string.
It is clear for the empty string and the initial state.
Let $(q, f)$ be the state of $\mathcal{B}$ after reading $s$,
then $(q', f') = \widetilde{\delta}_a((q, f))$ is the state after reading $sa$.
By the induction assumption, the state $(q, f)$ and the string $s$ satisfy the property.
Then $\mathcal{B}$ reads the symbol $a$ and comes to the state
$(\delta_a^+(q), \delta_a^+ \circ f \circ \delta_a^-)$.
The automaton eventually leaves ${\vdash}s$ to the right in the state $q$;
then it comes to $a$ in this state and makes a transition to $\delta_a^+(q)$,
thus leaving ${\vdash}sa$ to the right.
To prove the second condition, let the 2PerFA start on ${\vdash}sa$
at the symbol $a$ in a state $p \in Q_-$.
Then it moves left to the last symbol of ${\vdash}s$ in the state $\delta_a^-(p)$.
Then the computation continues on the string ${\vdash}s$ and its outcome is given by the function $f$.
Eventually the 2PerFA leaves ${\vdash}s$ to the right
and comes to $a$ in the state $f(\delta_a^-(p))$.
Then the 2PerFA looks at the symbol and goes to $\delta_a^+(f(\delta_a^-(p)))$ moving to the right.
If $f(\delta_a^-(p))$ is undefined, then so is $f'(p)$.
So, the function $\delta_a^+ \circ f \circ \delta_a^-$ indeed satisfies the second claim.
\end{proof}

\begin{figure}[t]
	\centerline{\includegraphics[scale=0.5]{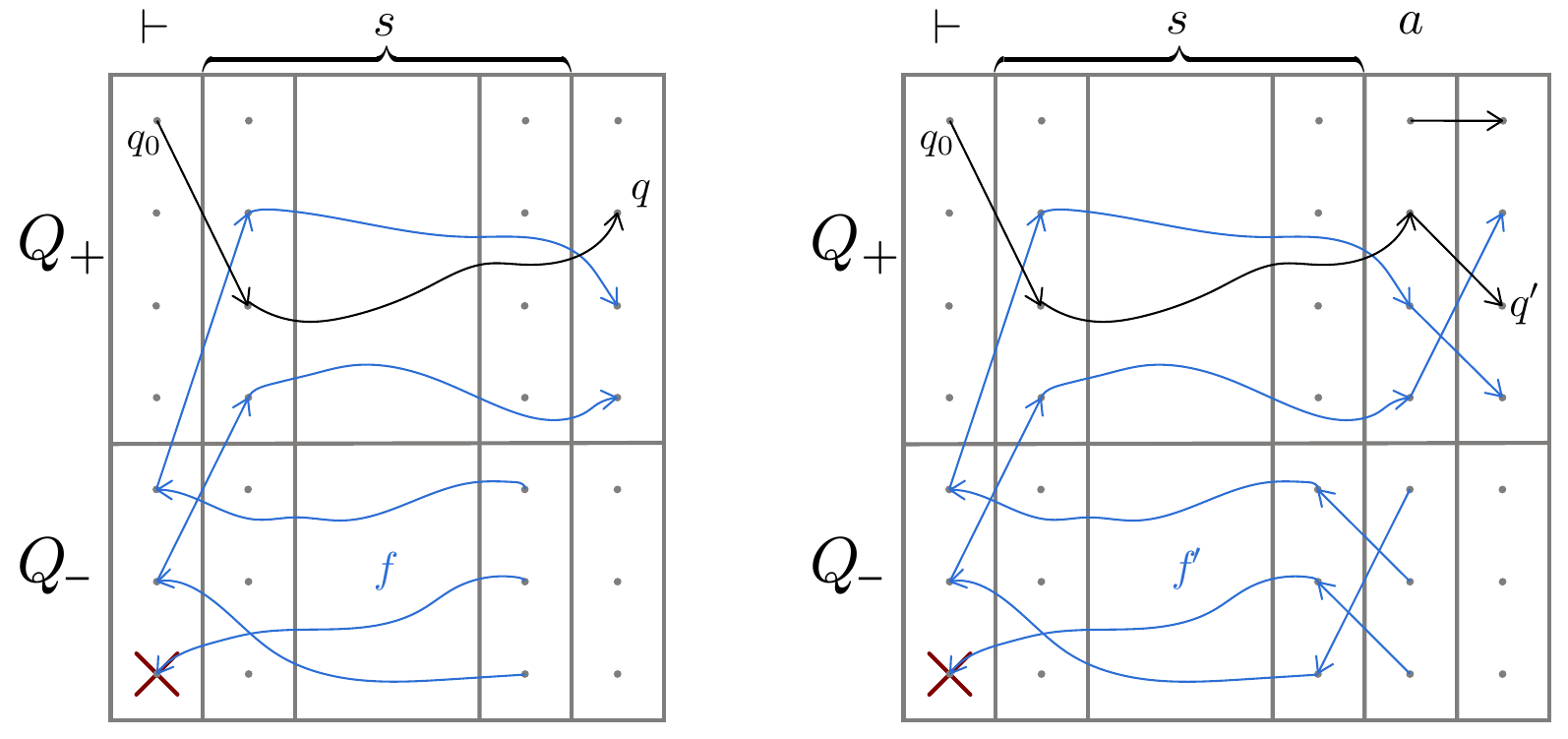}}
	\caption{(left) $\mathcal{B}$ in a state $(q, f)$;
		(right) transition of $\mathcal{B}$ by $a$.}
	\label{f:transition_building}
\end{figure}
To define $(q, f)$ as an accepting or a rejecting state, consider the following sequence of states $\{q_i\}_{i \geqslant 1}$ with $q_i \in Q_+$. The first element is
$$q_1 = q$$
For each $q_i$ if the 2PerFA has a transition by $\dashv$ from $q_i$ and the function $f$ is defined on $\delta_{\dashv}(q_i)$ then
\begin{equation*}
	q_{i + 1} = f(\delta_{\dashv}(q_i))
\end{equation*}
Otherwise the sequence ends. The sequence $\{ q_i \}_{i \geqslant 1}$ is always finite, because if it loops then some state $\widetilde{q}$ appears at least twice. Consider the earliest repeated state. If it is not $q_1$ then there is the previous one. The previous state for $\widetilde{q}$ is the same for all its appearances as $f(\delta_{\dashv})$ is an injective function. Therefore, $\widetilde{q}$ is not the earliest repeated state. So, $\widetilde{q}$ should be $q_1$. As $q_1 \notin \text{Im }f$ the state $q_1$ cannot be repeated, a contradiction.

If this sequence ends with an accepting state $q_i \in F$, then the state $(q, f)$ is accepting in $\mathcal{B}$. Otherwise, the state $(q, f)$ is rejecting.

The constructed one-way automaton accepts the same language as the 2PerFA because when it comes by some string $s$ to a state $(q, f)$, then before accepting or rejecting on ${\vdash}s{\dashv}$ the 2PerFA  passes through the sequence of states $\{ q_i \}_{i \geqslant 1}$, with $q_i \in Q_+$, at $\dashv$.

\begin{claim}
	The resulting one-way automaton is a permutation automaton.
\end{claim}
\begin{proof}
We will prove that the transition function $\widetilde{\delta}_a$
is a bijection for each symbol $a \in \Sigma$. Firstly, we show its injectivity. Let
\begin{equation*}
	\widetilde{\delta}_a((q_1, f_1)) = \widetilde{\delta}_a((q_2, f_2))
\end{equation*}
Then, by the definition of $\widetilde{\delta}_a$,
\begin{equation*}
	(\delta_a^+(q_1), \delta_a^+ \circ f_1 \circ \delta_a^-) = (\delta_a^+(q_2), \delta_a^+ \circ f_2 \circ \delta_a^-)
\end{equation*}
Then $\delta_a^+(q_1) = \delta_a^+(q_2)$, which means that $q_1 = q_2$, because $\delta_a^+$ is a bijection. Then consider the equality
\begin{equation*}
	\delta_a^+ \circ f_1 \circ \delta_a^- = \delta_a^+ \circ f_2 \circ \delta_a^-
\end{equation*}
Taking a composition of both sides of the equation
with $(\delta_a^+)^{-1}$ on the left and $(\delta_a^-)^{-1}$ on the right yields
\begin{equation*}
	(\delta_a^+)^{-1} \circ \delta_a^+ \circ f_1 \circ \delta_a^- \circ (\delta_a^-)^{-1} = (\delta_a^+)^{-1} \circ \delta_a^+ \circ f_2 \circ \delta_a^- \circ (\delta_a^-)^{-1}
\end{equation*}
Then $f_1 = f_2$, and the injectivity is proved. The function $\widetilde{\delta_a}$ is total and has equal domain and range, it is therefore a bijection.
\end{proof}

This completes the proof of the lemma.
\end{proof}

\begin{theorem}\label{upper_bound_theorem}
For every 2PerFA $\mathcal{A} = (\Sigma, Q_+, Q_-, q_0, \langle \delta_a^+ \rangle_{a \in \Sigma}, \langle \delta_a^- \rangle_{a \in \Sigma}, \delta_{\vdash}, \delta_{\dashv}, F)$ with
\begin{equation*}
	|Q_+| = k, \quad |Q_-| = \ell, \quad |Q_-^{\times}| = m,
\end{equation*}
where $Q_-^{\times} \subseteq Q_-$~is the set of states from which there is no transition by $\vdash$,
there exists a 1PerFA with at most
\begin{equation*}
k \cdot { \ell \choose m} \cdot {k - 1 \choose \ell - m} \cdot (\ell - m)!
\end{equation*}
states that recognizes the same language.
\end{theorem}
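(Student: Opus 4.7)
The plan is to invoke Lemma~\ref{2PerFA_to_1PerFA_lemma} directly and then simply count the number of pairs $(q, f)$ that satisfy the five listed restrictions; the theorem will follow because the constructed 1PerFA has no more states than there are such admissible pairs (and in fact we may take $Q$ to be exactly the set of admissible pairs, since unreachable ones only make the bound worse, not better).

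First I would fix $q \in Q_+$; there are $k$ choices. Next, since $f$ must be undefined on exactly $m$ states of $Q_-$, I would choose the subset $D \subseteq Q_-$ of size $m$ on which $f$ is undefined, giving $\binom{\ell}{m}$ options. The remaining $\ell - m$ states of $Q_-$ must then be mapped \emph{injectively} into $Q_+$, with the additional constraint that $q$ is not in the image, so the target set has $k - 1$ elements. I would count such injections in the standard way: choose the image, a subset of $Q_+ \setminus \{q\}$ of size $\ell - m$, contributing $\binom{k-1}{\ell-m}$, and then choose the bijection from the $\ell - m$ ``defined'' domain elements onto this image, contributing $(\ell - m)!$. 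Multiplying the four independent factors yields the announced bound
\begin{equation*}
k \cdot \binom{\ell}{m} \cdot \binom{k-1}{\ell-m} \cdot (\ell - m)!.
\end{equation*}

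There is essentially no obstacle: Lemma~\ref{2PerFA_to_1PerFA_lemma} has already done all the work of showing that a 1PerFA exists whose states lie among the pairs satisfying the five restrictions, and that the transition function is a bijection on this set. The only thing to check is that the count is valid, namely that $\binom{k-1}{\ell-m}$ is well defined (i.e., $\ell - m \leqslant k - 1$). If $\ell - m > k - 1$, then no injection from $Q_- \setminus D$ into $Q_+ \setminus \{q\}$ exists, so the whole expression is zero by convention, but also no admissible state $(q,f)$ exists with this profile, so the automaton is trivially bounded. Hence the stated count is a correct upper bound in all cases, and one concludes the proof in a single line after Lemma~\ref{2PerFA_to_1PerFA_lemma}.
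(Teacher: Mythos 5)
Your proposal is correct and follows essentially the same route as the paper: invoke Lemma~\ref{2PerFA_to_1PerFA_lemma} and count the admissible pairs $(q,f)$ by choosing $q$, the undefined set, the image, and the bijection, giving $k \cdot \binom{\ell}{m}\cdot\binom{k-1}{\ell-m}\cdot(\ell-m)!$. The extra remark about the degenerate case $\ell-m>k-1$ is a harmless addition not present in the paper.
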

\begin{proof}
Consider the one-way automaton $\mathcal{B}$ obtained for the 2PerFA $\mathcal{A}$ in Lemma~\ref{2PerFA_to_1PerFA_lemma}. Every state $(q, f)$ of $\mathcal{B}$ satisfies the following conditions:
\begin{itemize}
	\item $q \in Q_+$,
	\item $f \colon Q_- \to Q_+$ is a partially defined function,
	\item $q \notin \text{ Im} f$,
	\item $f$ is injective,
	\item and $f$ is undefined on exactly $m$ states.
\end{itemize}

For a fixed $q \in Q_+$, let us count the number of functions satisfying the conditions above: firstly, we should choose $m$ states from $Q_-$ on which $f$ is not defined.
Secondly, from the $k - 1$ states we should choose $\ell - m$ different values for $f$'s range. And lastly, we can choose a bijection between these two sets in $(\ell - m)!$ ways.
\begin{equation*}
	{ \ell \choose m} \cdot {k - 1 \choose \ell - m} \cdot (\ell - m)!
\end{equation*}
By multiplying this number by $k$ (the number of different states $q$) we will get the claimed number of states.
\end{proof}

If a two-way automaton has $n$ states in total,
then there is only a finite number of partitions into left-moving and right-moving states,
and finitely many choices of $m$,
and hence the following number of states
is sufficient to transform this automaton to one-way.
\begin{equation*}
	F(n) = \max_{\substack{k, \ell, m\\ k > 0,\; \ell \geqslant m \geqslant 0\\m \geqslant \ell - k + 1}} k \cdot { \ell \choose m} \cdot {k - 1 \choose \ell - m} \cdot (\ell - m)!
\end{equation*}

\begin{corollary}\label{upper_bound_corollary}
For every $n$-state 2PerFA there exists a 1PerFA with $F(n)$ states that recognizes the same language.
\end{corollary}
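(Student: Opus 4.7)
The plan is to apply Theorem~\ref{upper_bound_theorem} directly and then absorb the dependence on the particular partition into the maximum $F(n)$. Given an arbitrary $n$-state 2PerFA $\mathcal{A}$, I would set $k = |Q_+|$, $\ell = |Q_-|$, and $m = |Q_-^{\times}|$, noting that $k + \ell = n$. Theorem~\ref{upper_bound_theorem} then immediately produces a 1PerFA with $k \cdot \binom{\ell}{m} \cdot \binom{k-1}{\ell-m} \cdot (\ell-m)!$ states recognizing $L(\mathcal{A})$.

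To conclude, I would verify that the triple $(k, \ell, m)$ lies in the domain of the maximization defining $F(n)$. The inequalities $k \geqslant 1$ and $\ell \geqslant m \geqslant 0$ are immediate from the definition of a 2PerFA: $q_0 \in Q_+$ forces $k \geqslant 1$, and $Q_-^{\times} \subseteq Q_-$ forces $0 \leqslant m \leqslant \ell$. The only inequality requiring a short argument is $m \geqslant \ell - k + 1$, and this encodes the injectivity of $\delta_{\vdash}$ on its domain $\{q_0\} \cup (Q_- \setminus Q_-^{\times})$; since that domain has size $1 + (\ell - m)$ and must embed injectively into $Q_+$, one obtains $1 + (\ell - m) \leqslant k$, that is, $m \geqslant \ell - k + 1$.

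Having checked the constraints, the bound $k \cdot \binom{\ell}{m} \cdot \binom{k-1}{\ell-m} \cdot (\ell-m)! \leqslant F(n)$ follows directly from the definition of $F(n)$ as a maximum over all admissible triples with $k + \ell = n$. There is no real obstacle here: the corollary is essentially a repackaging of Theorem~\ref{upper_bound_theorem} in which the partition-dependent bound is replaced by its worst case over all valid partitions of the $n$ states of a 2PerFA, and all of the substantive work has already been carried out in Lemma~\ref{2PerFA_to_1PerFA_lemma} and Theorem~\ref{upper_bound_theorem}.
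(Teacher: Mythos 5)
Your proposal is correct and matches the paper's (essentially immediate) argument: the corollary is just Theorem~\ref{upper_bound_theorem} combined with the observation that there are only finitely many admissible triples $(k,\ell,m)$ with $k+\ell=n$, so the bound is absorbed into the maximum defining $F(n)$. Your explicit check that $m \geqslant \ell - k + 1$ follows from the injectivity of $\delta_{\vdash}$ is a nice touch that the paper leaves implicit, but it is the same approach.
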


Later it will be proved that $F(n)$ is a sharp bound, that is,
for some $n$-state 2PerFA every 1PerFA recognizing the same language has to have at least $F(n)$ states.

\section{Lower bound on the number of states}\label{section_lower_bound}

In this section it will be shown
that the upper bound on the number of states in a 1DFA
needed to simulate a 2PerFA
is sharp for each triple $(k, \ell, m)$, where $k > \ell$ and $\ell \geqslant m > 0$.
Only the case of $k > \ell$ is considered,
because, as it will be shown later,
the maximum over $(k, \ell, m)$ in $F(n)$ is reached for $k > \ell$
(in other words, a 2PerFA that requires the maximum number of states in a 1PerFA has $|Q_+| > |Q_-|$).

\begin{theorem}\label{lower_bound_theorem}
For all $k, \ell, m$ with $k > \ell > 0$ and $\ell \geqslant m > 0$ there exists a 2PerFA $\mathcal{A} = (\Sigma, Q_+, Q_-, q_0, \langle \delta_a^+ \rangle_{a \in \Sigma}, \langle \delta_a^- \rangle_{a \in \Sigma}, \delta_{\vdash}, \delta_{\dashv}, F)$ such that
\begin{equation*}
	|Q_+| = k, \quad |Q_-| = \ell,
\end{equation*}
the function $\delta_{\vdash}$ is undefined on exactly $m$ arguments from $Q_-$, and every 1DFA recognizing $L(\mathcal{A})$ must have at least
\begin{equation*}
	k \cdot { \ell \choose m} \cdot {k - 1 \choose \ell - m} \cdot (\ell - m)!
\end{equation*}
states.
\end{theorem}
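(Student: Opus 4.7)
The plan is to build a concrete witness 2PerFA $\mathcal{A}$, apply Lemma~\ref{2PerFA_to_1PerFA_lemma} to obtain its 1PerFA $\mathcal{B}$, and then verify that all $N = k \cdot \binom{\ell}{m}\binom{k-1}{\ell-m}(\ell-m)!$ states of the set $Q$ from Claim~\ref{reachable_states_claim} are both reachable in $\mathcal{B}$ and pairwise Myhill--Nerode inequivalent. Since the minimal 1DFA for a group language is automatically a 1PerFA~\cite{HospodarMlynarcik}, any lower bound proved for 1PerFA transfers to arbitrary 1DFA, which yields the claim.

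For $\mathcal{A}$ I would fix $Q_+$ and $Q_-$ of the required sizes, designate a subset $Q_-^{\times}\subseteq Q_-$ of size $m$, and choose $\delta_\vdash$ to be any injection $\{q_0\}\cup(Q_-\setminus Q_-^{\times})\to Q_+$ whose image avoids one distinguished state $q_F$. The alphabet would contain one letter $\alpha_\pi$ for each transposition $\pi\in S_{Q_+}$, with $\delta_{\alpha_\pi}^+=\pi$ and $\delta_{\alpha_\pi}^-=\mathrm{id}$, and one letter $\beta_\sigma$ for each transposition $\sigma\in S_{Q_-}$, with $\delta_{\beta_\sigma}^+=\mathrm{id}$ and $\delta_{\beta_\sigma}^-=\sigma$. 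By reading suitable words, $\mathcal{B}$ then realizes every pair $(\pi_+,\pi_-)\in S_{Q_+}\times S_{Q_-}$, producing from $\widetilde{q}_0$ the state $\bigl(\pi_+(\delta_\vdash(q_0)),\ \pi_+\circ\delta_\vdash|_{Q_-}\circ\pi_-\bigr)$. A short case analysis, using $q_0\notin Q_-$ and the injectivity of $\delta_\vdash$, shows that as $(\pi_+,\pi_-)$ varies this orbit exhausts $Q$: given any target $(q^*,f^*)\in Q$, one first chooses $\pi_-$ to map the undefined-set of $f^*$ onto $Q_-^{\times}$ and induce the ``shape'' of $f^*$, then picks $\pi_+$ to place $\delta_\vdash(q_0)$ at $q^*$ and correct the values, using freely that $k-1\geqslant\ell-m$.

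For distinguishability I would complete the construction by choosing $F=\{q_F\}$ together with an injective $\delta_\dashv\colon Q_+\setminus F\to Q_-$, arranged so that the acceptance sequence $\{q_i\}_{i\geqslant 1}$ from the proof of Lemma~\ref{2PerFA_to_1PerFA_lemma} acts as a deterministic probe visiting $q,\ f(\delta_\dashv(q)),\ f(\delta_\dashv(f(\delta_\dashv(q)))),\ldots$ and accepting iff it reaches $q_F$ before hitting an undefined point. Given any two distinct states $(q_1,f_1),(q_2,f_2)\in Q$, I would use the reachability step to prepend a permutation-realizing word that rotates their first coordinate of disagreement into the head of the probe, forcing one walk to meet $q_F$ while the other falls into an undefined point. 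The main obstacle is the uniform existence of such a separating word for every pair in $Q$; this is exactly where the hypotheses $k>\ell$ and $m>0$ enter, the former providing the room in $Q_+$ for a free target $q_F$ lying outside $\mathrm{Im}\, f$ for every admissible $f$, and the latter guaranteeing that every $f$ has at least one undefined point through which rejection can be propagated.
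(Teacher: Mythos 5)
Your proposal follows essentially the same route as the paper's proof: a witness 2PerFA whose letters generate the full symmetric groups on $Q_+$ and on $Q_-$ independently (the paper uses just two generators for each, giving a four-letter alphabet), reachability of every state of $Q$ by words realizing an arbitrary pair of permutations, and distinguishability via a permutation suffix that steers the end-marker probe of one state into an accepting state and the other into an undefined transition. Two points need repair, though neither changes the approach. First, for $k \geqslant \ell + 2$ there is no total injection $\delta_{\dashv} \colon Q_+ \setminus F \to Q_-$, since $|Q_+ \setminus F| = k - 1 > \ell$; you need $\delta_{\dashv}$ to be a \emph{partial} injection defined on exactly $\ell$ non-accepting states and, crucially, surjective onto $Q_-$ --- surjectivity is what lets you pick, for any $r \in Q_-$ on which $f_1$ and $f_2$ disagree, a state $q'$ with $\delta_{\dashv}(q') = r$ to serve as the head of the probe (the paper sets $\delta_{\dashv}(q_i) = r_i$ for $0 \leqslant i < \ell$ and leaves the remaining $k-\ell$ states, which it makes accepting, without a transition). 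Second, the subcase where $f_1(r)$ and $f_2(r)$ are both defined but different requires more than having one walk ``fall into an undefined point'' at the step of disagreement: you must choose $\pi$ to send $f_1(r)$ to a state $q^*$ with $\delta_{\dashv}(q^*) = r^*$ where $f_1(r^*)$ is undefined, and verify $r^* \neq r$ (which holds exactly because $f_1(r)$ is defined), while simultaneously sending $f_2(r)$ to the accepting state and $q_1$ to $q'$; this three-point assignment is consistent because $q_1 \notin \mathrm{Im}\, f_1 \cup \mathrm{Im}\, f_2$. With these details filled in, your argument coincides with the paper's.
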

\begin{proof}
Fix $k, \ell, m$ and consider a 2PerFA $\mathcal{A} = (\Sigma, Q_+, Q_-, q_0, \langle \delta_a^+ \rangle_{a \in \Sigma}, \langle \delta_a^- \rangle_{a \in \Sigma}, \delta_{\vdash}, \delta_{\dashv}, F)$ where
\begin{itemize}
\item
	$Q_+ = \{ q_0, \ldots, q_{k - 1} \}, Q_- = \{r_0, \ldots, r_{\ell - 1}\}$.
\item
	The initial state is $q_0$ and the accepting states are $\{ q_{\ell}, \ldots, q_{k - 1} \}$.
\item
	The functions $\delta_a^+$ and $\delta_b^+$
	are generators of the permutation group on the set $Q_+$
	(for example, these could be a cycle on all elements of $Q_+$ and an elementary transposition).
	Similarly, $\delta_c^-, \delta_d^-$ are generators of the permutation group on the set $Q_-$,
	and $\delta_a^-, \delta_b^-, \delta_c^+, \delta_d^+$ are identity functions.
\item
	Transitions at the left end-marker are $\delta_{\vdash}(q_0) = q_0$
	and $\delta_{\vdash}(r_i) = q_{i + 1}$ for $0 \leqslant i < \ell - m$.
	There are no transitions by $\vdash$ in the remaining $m$ states.
\item
	Transitions at the right end-marker are $\delta_{\dashv}(q_i) = r_i$ for $0 \leqslant i < \ell$.
	There are no transitions by $\dashv$ in the remaining $k - \ell$ states.
\end{itemize}
\begin{figure}[t]
	\centerline{\includegraphics[scale=0.5]{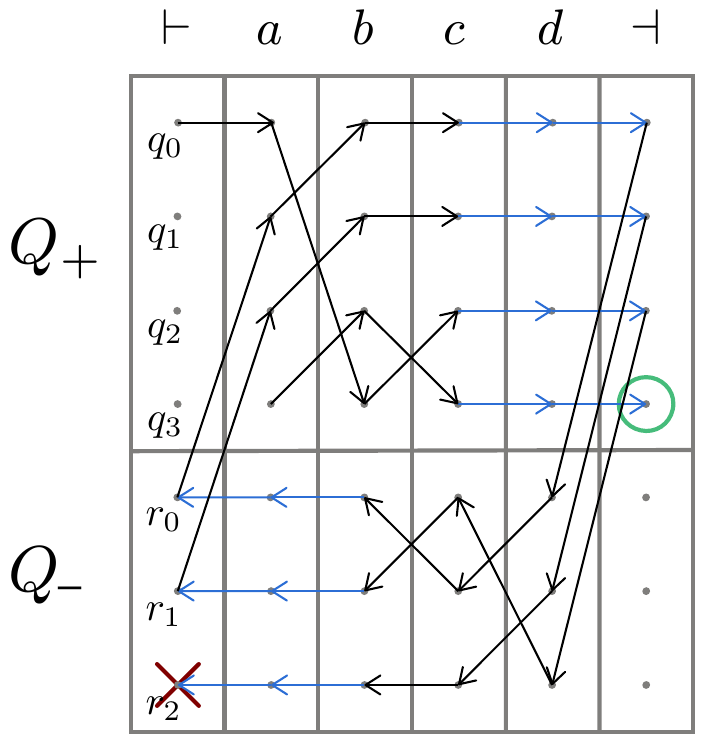}}
	\caption{Symbols $a,b,c,d$ in the construction of $\mathcal{A}$.}
	\label{f:A_building}
\end{figure}
The proof of the lower bound on the size of any 1DFA recognizing this language
is by showing that the automaton
$\mathcal{B} = (Q, \widetilde{q}_0, \widetilde{F}, \widetilde{\delta}, \Sigma)$
obtained from $\mathcal{A}$ by the transformation in Lemma~\ref{2PerFA_to_1PerFA_lemma}
will be minimal.
We will show that every state is reachable and for every two states there exists a separating string.
Firstly prove the reachability. Consider a state $(q, f)$.
It satisfies the following conditions from Lemma~\ref{2PerFA_to_1PerFA_lemma}:
\begin{itemize}
\item $q \in Q_+$,
\item $f \colon Q_- \to Q_+$ is a partially defined function,
\item $q \notin \text{Im} f$,
\item $f$ is injective,
\item $f$ is undefined on exactly $m$ states.
\end{itemize}

\begin{figure}[t]
	\centerline{\includegraphics[scale=0.5]{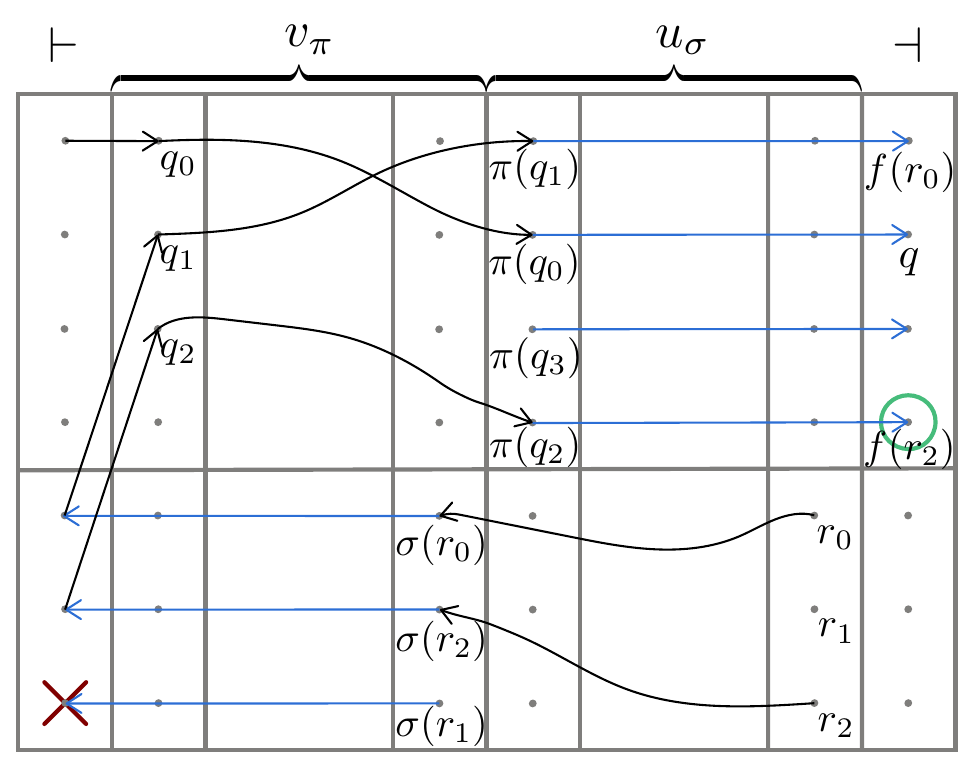}}
	\caption{The action of $\sigma$ and $\pi$.}
	\label{f:reachability}
\end{figure}

Let $\sigma$ be a permutation on the set $Q_-$
that maps the states on which $f$ is not defined
to the $m$ states from $Q_-$ without a transition by $\vdash$.
Take a string $u_{\sigma} \in \{ c, d \}^*$ that,
when read from right to left, implements the permutation $\sigma$,
and acts as an identity on $Q_+$ when read from left to right.
Next, the goal is to define a permutation $\pi$ on the set $Q_+$
that maps $q_0$ to $q$, and, for each state $q' \in Q_-$ on which $f$ is defined,
it should map the state $\delta_{\vdash}(\sigma(q'))$ to the state $f(q')$,
as shown in Figure~\ref{f:reachability}.
Note that $\delta_{\vdash}(\sigma(q'))$ is defined for all $q'$ in the domain of $f$.
We can introduce such a permutation
because each state $\delta_{\vdash}(\sigma(q'))$ is not equal to $q_0$
and they are all pairwise distinct (as $\sigma$ is a permutation and $\delta_{\vdash}$ is an injection).
Also each state $f(q')$ is not equal to $q$ and they are all pairwise distinct too.
Take the string $v_{\pi} \in \{ a, b \}^*$ that, when read from left to right, implements $\pi$,
and is an identity on $Q_-$ if read from right to left.
So, by the string $v_{\pi}u_{\sigma}$ we reach the state $(q, f)$.

Next, the existence of a separating string for all pairs of states is proved. Consider different states $(q_1, f_1)$ and $(q_2, f_2)$. Let them be reached by strings $s_1$ and $s_2$, respectively. There are several cases.
\begin{itemize}
	\item The states $q_1, q_2$ are different, as shown in Figure~\ref{f:lower_estimation_1}. Fix a state $r \in Q_-$ with no transition on the left end-marker defined. Since $\mathcal{A}$ is a permutation automaton, there exists a state $\widetilde{r} \in Q_-$ such that after reading $s_1$ from right to left starting in the state $\widetilde{r}$, the automaton is in the state $r$. Also let $q_1' \in Q_+$ be the state from which there is a transition to $\widetilde{r}$ by the right end-marker: such a state exists, because in the 2PerFA there are transitions to all states in $Q_-$ by the right end-marker. Then let $\pi$ be such a permutation on the set $Q_+$ that maps $q_1$ to $q_1'$ and $q_2$ to an accepting state $q_2' \in Q_+$. Let $v_{\pi} \in \{ a, b \}^*$ be the string that implements the permutation $\pi$ when read from left to right. So, the string $s_1v_{\pi}$ will be rejected by 2PerFA and the string $s_2v_{\pi}$ will be accepted. That is, $v_{\pi}$ is a separating string.
	\begin{figure}[t]
		\centerline{\includegraphics[scale=0.5]{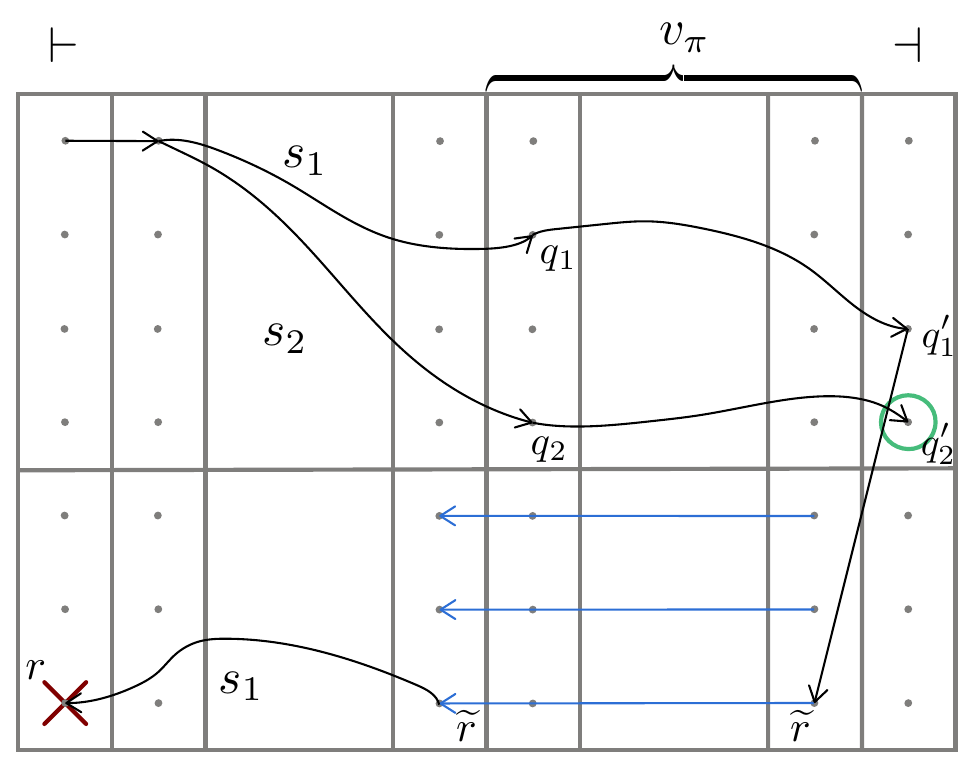}}
		\caption{A separating string for states $(q_1, f_1)$ and $(q_2, f_2)$, with $q_1 \neq q_2$.}
		\label{f:lower_estimation_1}
	\end{figure}
	\item The states $q_1, q_2$ are the same, but $f_1 \neq f_2$. Because $f_1 \neq f_2$ there exists a state $r \in Q_-$ on which these functions differ, that is, either one of $f_1(r), f_2(r)$ is defined and the other is not, or both are defined and are different states.
	\begin{enumerate}
	\item
		First, assume that $f_1$ is defined on $r$,
		but $f_2$ is not (the case of $f_2(r)$ defined and $f_1(r)$ undefined is symmetric). It is true that $f_1(r) \neq q_1$, because $q_1 \notin \text{ Im}f_1$ in every state of $\mathcal{B}$. Fix the state $q$ with a transition from it to $r$ by the right end-marker. Then let $\pi$ be such a permutation of the set $Q_+$ that maps $q_1$ to $q$ and $f_1(r)$ to an accepting state. And let the string $v_{\pi} \in \{ a, b \}^*$ implement $\pi$ when the 2PerFA reads it from left to right as shown in Figure~\ref{f:lower_estimation_2_1}. The string $s_1v_{\pi}$ will be accepted by the 2PerFA and the string $s_2v_{\pi}$ will be rejected. So, $v_{\pi}$ is a separating string.
		\begin{figure}[t]
			\centerline{\includegraphics[scale=0.5]{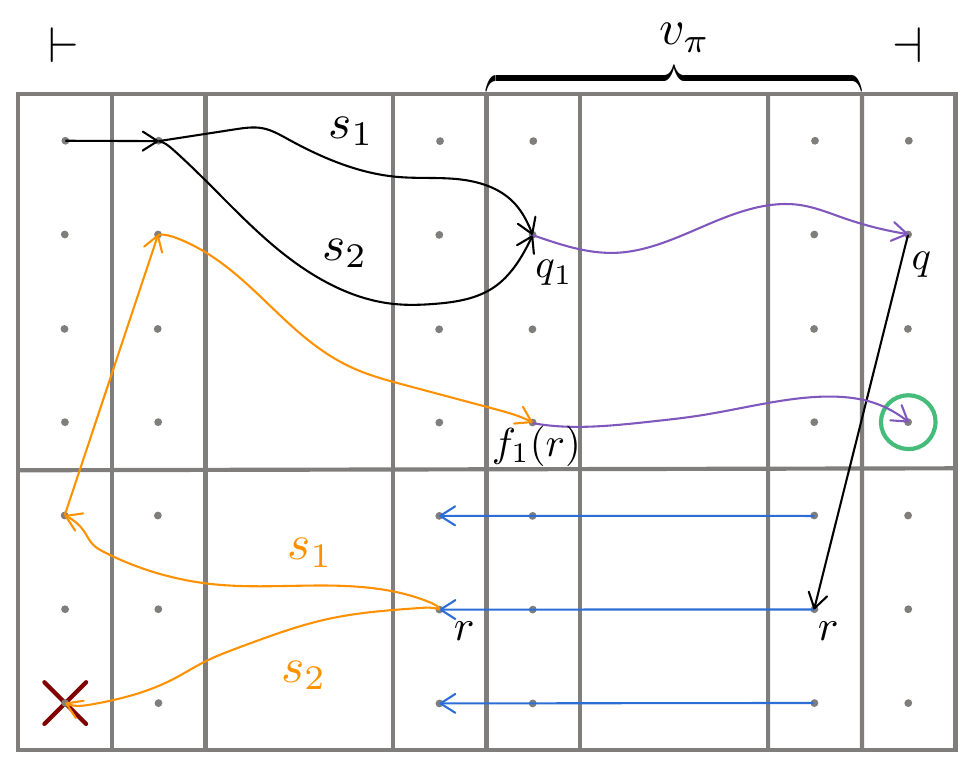}}
			\caption{A separating string for states $(q_1, f_1)$ and $(q_2, f_2)$, with $q_1 = q_2$, $f_1(r)$ defined, $f_2(r)$ undefined.}
			\label{f:lower_estimation_2_1}
		\end{figure}
	\item
		Both functions $f_1$ and $f_2$ are defined on $r$.
		It is true that $f_1(r) \neq q_1$ and $f_2(r) \neq q_1$.
		Again, fix the state $q$ from which there is a transition to $r$ by the right end-marker.
		Then fix a state $\widetilde{r}$ from $Q_-$
		without a transition by the left end-marker
		(it exists because $m \geqslant 1$).
		Let $r^* \in Q_-$ be the state from which the 2PerFA reads $s_1$
		and finishes in $\widetilde{r}$.
		We can choose such a state because the 2PerFA is a permutation automaton,
		and this state cannot coincide with $r$
		because in this case $f_1(r)$ would be undefined.
		Then let $q^* \in Q_+$ be the state,
		from which there is a transition by the right end-marker to $r^*$:
		it exists because there are such transitions to all states in $Q_-$.
		Let $v_{\pi} \in \{ a, b \}^*$ implement a permutation $\pi$ on $Q_+$
		that maps $q_1$ to $q$, $f_1(r)$ to $q^*$ and $f_2(r)$ to an accepting state,
		as illustrated in Figure~\ref{f:lower_estimation_2_2}.
		The string $s_1v_{\pi}$ will be rejected by the 2PerFA,
		and the string $s_2v_{\pi}$ will be accepted, so, $v_{\pi}$ is a separating string. 
		\begin{figure}[t]
			\centerline{\includegraphics[scale=0.5]{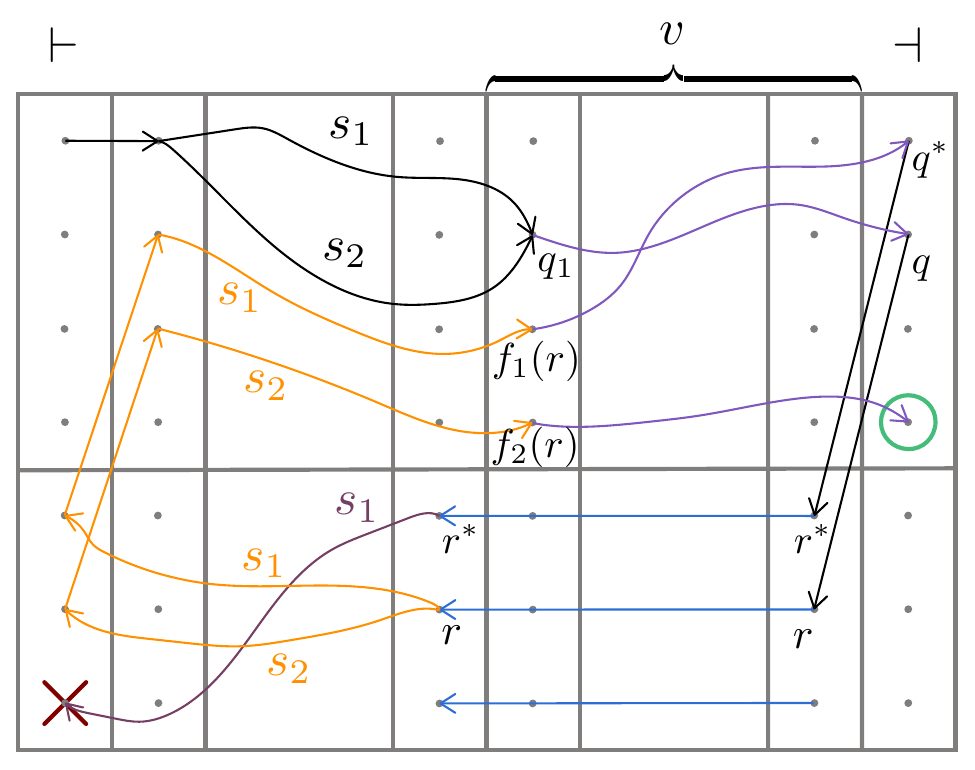}}
			\caption{A separating string for states $(q_1, f_1)$ and $(q_2, f_2)$, with $q_1 = q_2$, $f_1(r) \neq f_2(r)$.}
			\label{f:lower_estimation_2_2}
		\end{figure}
	\end{enumerate}
\end{itemize}
\end{proof}

\section{Optimal partition of $n$ in $F(n)$ and the logarithmic asymptotics of $F(n)$}\label{section_F_n}

It has been proved above
that every $n$-state 2PerFA can be transformed to an equivalent 1PerFA with
\begin{equation*}
	F(n) = \max_{\substack{k, \ell, m\\ k > 0,\; \ell \geqslant m \geqslant 0\\m \geqslant \ell - k + 1}} G(k, \ell, m)
\end{equation*}
states, where
\begin{equation*}
	G(k, \ell, m) = k \cdot { \ell \choose m} \cdot {k - 1 \choose \ell - m} \cdot (\ell - m)!
\end{equation*}
What is the optimal partition of $n$ states into $k$ right-moving and $\ell$ left-moving states,
and what is the optimal number $m$ of unused states at the left end-marker?
This question is answered in the following lemma.

\begin{lemma}\label{F_maximum_lemma}
For every fixed $n = k + \ell$ the function
\begin{equation*}
	G(k, \ell, m) = k \cdot { \ell \choose m} \cdot {k - 1 \choose \ell - m} \cdot (\ell - m)!
\end{equation*}
is defined for $k > 0$, $\ell \geqslant m \geqslant 0$, $m \geqslant \ell - k + 1$,
and reaches its maximum value on a triple $(k, \ell, m)$, where $k > \ell$.
If $n \geqslant 8$, then the optimal values of the arguments are:
\begin{equation*}
	k = \bigg\lfloor \frac{n + 2}{2} \bigg\rfloor, \quad \ell = \bigg\lceil \frac{n - 2}{2} \bigg\rceil, \quad m = 
	\begin{cases}
		\lceil \frac{\sqrt{3 + 2n} - 3}{2} \rceil, & n \text{ is odd} \\
		\lceil \frac{\sqrt{4 + 2n} - 4}{2} \rceil, & n \text{ is even} \\
	\end{cases}
\end{equation*}
\end{lemma}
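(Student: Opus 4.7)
The plan is to optimize in two stages: first, for each fixed pair $(k,\ell)$ with $k + \ell = n$, locate the optimal $m = m^\ast(k,\ell)$ by a one-dimensional ratio analysis; then, for the resulting sequence $G\bigl(k,n-k,m^\ast(k,n-k)\bigr)$, determine the optimal split $k$.

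For the first stage I would compute the simple ratio
\[
\frac{G(k,\ell,m+1)}{G(k,\ell,m)} \;=\; \frac{\ell - m}{(m+1)(k - \ell + m)},
\]
which is at least $1$ exactly when the quadratic $m^2 + (k-\ell+2)\,m + (k-2\ell) \leq 0$. Provided $k \leq 2\ell$, this quadratic has exactly one nonnegative root $m_0 = \tfrac{1}{2}\bigl(\sqrt{(k-\ell+2)^2 - 4(k-2\ell)} - (k-\ell+2)\bigr)$, so $G(k,\ell,\cdot)$ is unimodal and reaches its maximum at $m^\ast(k,\ell) = \lceil m_0 \rceil$. Substituting $k-\ell = 1$ (odd $n$, $\ell = (n-1)/2$) reduces the quadratic to $m^2 + 3m + (1-\ell)$ with positive root $\tfrac{\sqrt{3+2n}-3}{2}$, and substituting $k - \ell = 2$ (even $n$, $\ell = n/2 - 1$) reduces it to $m^2 + 4m + (2-\ell)$ with positive root $\tfrac{\sqrt{4+2n}-4}{2}$. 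Thus the claimed closed form for $m$ will drop out immediately once the optimal $(k,\ell)$ is justified.

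For the second stage I would work with the compact expression $G(k,\ell,m) = \tfrac{k!\,\ell!}{m!\,(\ell-m)!\,(k-\ell+m-1)!}$ and the diagonal ratio
\[
\frac{G(k+1,\ell-1,m)}{G(k,\ell,m)} \;=\; \frac{(k+1)(\ell-m)}{\ell\,(k-\ell+m)(k-\ell+m+1)}.
\]
Evaluated at $m = m^\ast(k,\ell)$ and using the estimate $m^\ast \approx \sqrt{\ell}$ coming from $m_0$, this ratio exceeds $1$ for $k - \ell$ small and falls below $1$ once $k - \ell$ reaches $1$ (odd $n$) or $2$ (even $n$); that pins the optimum to $k = \lfloor (n+2)/2 \rfloor$, $\ell = \lceil (n-2)/2 \rceil$, and in particular yields $k > \ell$ without a separate argument. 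To keep the comparison airtight one must account for the fact that $m^\ast$ itself changes along the diagonal: I would show that $|m^\ast(k+1,\ell-1) - m^\ast(k,\ell)| \leq 1$, and absorb this one-step perturbation through the unimodality established in the first stage, so that replacing $m^\ast(k,\ell)$ by $m^\ast(k+1,\ell-1)$ on the numerator side only increases $G(k+1,\ell-1,\cdot)$.

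The main obstacle is the second stage: even after substituting the optimal $m$, the diagonal ratio does not collapse into a transparent inequality, and the integer ceiling in $m^\ast = \lceil m_0 \rceil$ together with the boundary constraint $m \geq \ell - k + 1$ both must be handled carefully at the extreme splits. The hypothesis $n \geq 8$ is exactly what rules out the small-case anomalies where the ceiling or the boundary forces $m^\ast$ onto a side where the unimodality bound becomes an equality; for $n < 8$ one verifies the formula by direct enumeration of the finitely many admissible triples $(k,\ell,m)$.
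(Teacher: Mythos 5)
Your two-stage plan---optimizing $m$ for fixed $(k,\ell)$ via the ratio $G(k,\ell,m+1)/G(k,\ell,m)$, which yields exactly the paper's $m_{\mathrm{opt}} = \lceil (\sqrt{D}+\ell-k-2)/2 \rceil$ with $D=(k-\ell)^2+4(\ell+1)$, and then comparing adjacent splits via the diagonal ratio $G(k+1,\ell-1,\cdot)/G(k,\ell,\cdot)$ evaluated at the optimal $m$---is precisely the strategy of the paper's own proof sketch, and your computations (the quadratic in $m$, its discriminant, and the odd/even specializations) all check out. The paper likewise leaves the delicate second-stage comparison and the small-$n$ cases at the level of a sketch, so your proposal matches it both in approach and in degree of detail.
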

\begin{proof}[Sketch of a proof]
	To prove this, firstly, find an optimal value of $m$ for a fixed pair $(k, \ell)$. Denote it by $m_{\text{opt}} = m_{\text{opt}}(k, \ell)$. Then analyse the next ratio 
	\begin{equation*}
		\frac{G(k, \ell, m_{\text{opt}}(k, \ell))}{G(k + 1, \ell - 1, m_{\text{opt}}(k + 1, \ell - 1))}
	\end{equation*}
	It is proved that this ratio is at least $1$ if $k > \ell$ and $n \geqslant 8$, and at most $1$ if $k \leqslant \ell$. Therefore, the optimal partition $n = k + \ell$ has $k > \ell$. If $n \geqslant 8$, then it has $k = \ell + 1$ or $k = \ell + 2$, depending on the parity of $n$, and the optimal values of $k$ and $\ell$ are
	\begin{equation*}
		 k = \bigg\lfloor \frac{n + 2}{2} \bigg\rfloor, \quad \ell = \bigg\lceil \frac{n - 2}{2} \bigg\rceil
	\end{equation*}
	There is a formula for $m_{\text{opt}}(k, \ell)$, and its value for approximately equal $k$ and $\ell$ is
	\begin{equation*}
		m_{\text{opt}}(k, \ell) = \bigg\lceil \frac{\sqrt{D} + \ell - k - 2}{2} \bigg\rceil, \quad \text{where } D = (k - \ell)^2 + 4(\ell + 1)
	\end{equation*}
	Then, for a given $n \geqslant 8$, the claimed optimal value of $m$ can be found by substituting the optimal values of $k$ and $\ell$ into the formula for $m_{\text{opt}}$.
\end{proof}

With the optimal values of $k$, $\ell$ and $m$ determined,
the main result of this paper can now be finally stated.

\begin{theorem}
Let $n \geqslant 1$.
For every $n$-state 2PerFA there exists a 1PerFA with $F(n)$ states
that recognizes the same language,
and in the worst case $F(n)$ states in a 1PerFA are necessary.
\end{theorem}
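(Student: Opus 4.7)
The theorem splits into an upper bound and a matching lower bound. The upper bound is immediate from Corollary~\ref{upper_bound_corollary}, which already shows that every $n$-state 2PerFA is simulated by a 1PerFA of size at most $F(n)$; so the bulk of the work lies in the lower bound.

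For the lower bound the plan is to chain together Lemma~\ref{F_maximum_lemma} and Theorem~\ref{lower_bound_theorem}. First I would use Lemma~\ref{F_maximum_lemma} to single out a triple $(k^*, \ell^*, m^*)$ with $k^* + \ell^* = n$ that realises the maximum, so that $G(k^*, \ell^*, m^*) = F(n)$. For $n \geq 8$ the lemma provides explicit formulas for $k^*, \ell^*, m^*$, from which one verifies in one line that $k^* > \ell^* \geq 1$ and $m^* \geq 1$. Hence the hypotheses of Theorem~\ref{lower_bound_theorem} are met at $(k^*, \ell^*, m^*)$, and that theorem supplies an $n$-state 2PerFA $\mathcal{A}$ whose language requires at least $G(k^*, \ell^*, m^*) = F(n)$ states in any 1DFA.

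To pass from this 1DFA lower bound to the desired 1PerFA lower bound, I would invoke the fact noted in the introduction (following Hospod\'ar and Mlyn\'ar\v{c}ik~\cite{HospodarMlynarcik}) that the minimal 1DFA of a group language is automatically a 1PerFA. Since $L(\mathcal{A})$ is recognised by the 1PerFA built in Lemma~\ref{2PerFA_to_1PerFA_lemma}, it is a group language, so its minimal 1DFA is a 1PerFA of size at least $F(n)$, as required.

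The main obstacle is the small range $1 \leq n \leq 7$, in which Lemma~\ref{F_maximum_lemma} does not yield closed-form optima and the optimal $m^*$ can equal $0$, putting $(k^*, \ell^*, m^*)$ outside the scope of Theorem~\ref{lower_bound_theorem}. These finitely many exceptions I would handle by direct inspection: compute $F(n)$ from its definition, and for each $n$ exhibit an explicit witness, either via a mild variant of the construction in Theorem~\ref{lower_bound_theorem} adapted to the case $m^* = 0$ (essentially dropping the $Q_-^{\times}$ part of the construction), or via a trivial ad hoc example when $F(n)$ is small.
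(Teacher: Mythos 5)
Your proposal is correct and follows essentially the same route as the paper: upper bound from Corollary~\ref{upper_bound_corollary}, lower bound for $n \geqslant 8$ by combining Lemma~\ref{F_maximum_lemma} (which gives $k^* > \ell^*$ and $m^* \geqslant 1$) with Theorem~\ref{lower_bound_theorem}, and case analysis for small $n$ (the paper notes that for $n=5,6,7$ the optimum has $m=1$ so Theorem~\ref{lower_bound_theorem} applies directly, for $n=4$ the same witness works by a computer check, and for $n\leqslant 3$ the languages $(a^n)^*$ suffice). One minor remark: your detour through the fact that the minimal 1DFA of a group language is a 1PerFA is unnecessary, since a lower bound holding for all 1DFAs automatically holds for all 1PerFAs, every 1PerFA being a 1DFA.
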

\begin{proof}
The upper bound is given in Corollary~\ref{upper_bound_corollary}.

For the lower bound, for every $n \geqslant 8$,
let $k$, $\ell$ and $m$ be as in Lemma~\ref{F_maximum_lemma}.
Then, since $k>\ell$,
Theorem~\ref{lower_bound_theorem} presents the desired $n$-state 2PerFA,
for which every 1PerFA recognizing the same language
must have at least $G(k, \ell, m)=F(n)$ states.

For $n=5,6,7$, a calculation of possible values $k, \ell, m$ shows that the maximum of $G(k, \ell, m)$
is reached for $m=1$.
Then, Theorem~\ref{lower_bound_theorem} is still applicable
and provides the witness languages.

For $n=4$, the optimal values given by a calculation
are $k=3$, $\ell=1$ and $m=0$.
Nevertheless, the same automaton as in Theorem~\ref{lower_bound_theorem}
still provides the desired lower bound,
which was checked by a computer calculation.

Finally, $F(n)=n$ for $n=1,2,3$, and (trivial) witness languages are $(a^n)^*$.
\end{proof}

So, $F(n) = G(k, \ell, m)$ for the specified values of $k, \ell, m$. As
\begin{equation*}
	G(k, \ell, m) = k \cdot { \ell \choose m} \cdot {k - 1 \choose \ell - m} \cdot (\ell - m)! = \frac{k! \ell!}{(k - 1 - \ell + m)! m! (\ell - m)!}
\end{equation*}
the asymptotics of $F(n)$ can be determined by
using Stirling's approximation of factorials for $k, \ell, m$ from the optimal partition.
The final result is given in the following theorem.
\begin{theorem}\label{F_n_asymptotics_theorem}
	$F(n) = n^{\frac{n}{2} - \frac{1 + \ln 2}{2}\frac{n}{\ln n}(1 + o(1))}$.
\end{theorem}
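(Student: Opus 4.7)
The plan is to substitute the optimal values of $k$, $\ell$, $m$ given in Lemma~\ref{F_maximum_lemma} into the closed form $F(n) = \frac{k!\,\ell!}{(k-1-\ell+m)!\,m!\,(\ell-m)!}$ and apply Stirling's approximation $\ln N! = N\ln N - N + O(\ln N)$ to each factorial. The relevant orders of magnitude are $k,\ell = n/2 + O(1)$, $m = \sqrt{n/2}\,(1 + O(n^{-1/2}))$, and $r := k-1-\ell+m = m + O(1)$, so in the denominator the factorials $m!$ and $r!$ live at the $\sqrt{n/2}$ scale, whereas $(\ell-m)!$ is nearly as large as the numerator factorials.

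Next I would expand each factorial. A direct Stirling computation yields
\begin{align*}
\ln k! + \ln \ell! &= n\ln(n/2) - n + O(\ln n),\\
\ln m! + \ln r! &= 2m\ln m - 2m + O(\ln n),\\
\ln(\ell - m)! &= \frac{n}{2}\ln(n/2) - m\ln(n/2) - \frac{n}{2} + O(\ln n),
\end{align*}
where the third line uses the Taylor expansion $\ln(\ell - m) = \ln(n/2) - 2m/n + O(m^2/n^2)$ inside the $(\ell-m)\ln(\ell-m)$ term. Subtracting and writing $\frac{n}{2}\ln(n/2) = \frac{n}{2}\ln n - \frac{n}{2}\ln 2$ produces
\begin{equation*}
\ln F(n) = \frac{n}{2}\ln n - \frac{n}{2}(1 + \ln 2) + m\ln(n/2) - 2m\ln m + 2m + O(\ln n).
\end{equation*}
The leading block $\frac{n}{2}\ln n - \frac{n}{2}(1+\ln 2)$ already matches the claimed asymptotic exactly, so the remaining task is to show that the $m$-dependent tail is $o(n)$.

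The main obstacle is the cancellation in that tail at the correct precision: a coarse approximation $\ln m \approx \frac{1}{2}\ln(n/2)$ leaves phantom terms of order $n$, which would destroy the asymptotic. The key is to exploit the explicit formula of Lemma~\ref{F_maximum_lemma}, which gives $m^2 = n/2 + O(\sqrt{n})$, hence $\ln(n/(2m^2)) = O(n^{-1/2})$; therefore $m\ln(n/2) - 2m\ln m = m\ln(n/(2m^2)) = O(1)$, and only the $2m = O(\sqrt{n})$ term survives. This gives $\ln F(n) = \frac{n}{2}\ln n - \frac{1+\ln 2}{2}\,n + O(\sqrt{n})$. Finally, rewriting $\frac{1+\ln 2}{2}\,n = \frac{1+\ln 2}{2}\cdot\frac{n}{\ln n}\cdot\ln n$ and absorbing the $O(\sqrt{n})/\ln n$ error into a $(1+o(1))$ factor yields the claimed $F(n) = n^{\frac{n}{2} - \frac{1+\ln 2}{2}\cdot\frac{n}{\ln n}(1+o(1))}$.
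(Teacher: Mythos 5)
Your proposal is correct and follows exactly the route the paper indicates (the paper only states that the asymptotics follow by applying Stirling's approximation to $\frac{k!\,\ell!}{(k-1-\ell+m)!\,m!\,(\ell-m)!}$ at the optimal $k,\ell,m$ from Lemma~\ref{F_maximum_lemma}, without writing out the details). Your careful treatment of the $m\ln(n/2)-2m\ln m+2m$ tail via $m^2=n/2+O(\sqrt{n})$ is precisely the cancellation needed to make that sketch rigorous, and the resulting error term $O(\sqrt{n})$ in $\ln F(n)$ is comfortably absorbed into the $(1+o(1))$ factor.
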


To compare, transformation of 2DFA of the general form to 1DFA
has the sharp bound proved by Kapoutsis~\cite{Kapoutsis_thesis}.
\begin{equation*}
	n(n^n - (n - 1)^n) + 1
\end{equation*}
The transformation of sweeping 2DFA to 1DFA~\cite{GeffertOkhotin_2dfa_to_1dfa}
requires slightly fewer states, yet still of the order $n^{n(1+o(1))}$.
\begin{equation*}
	\varphi(n)
		=
	\max\limits_{k=1}^n k^{n-k+1} + 1
		=
	n^{n - \frac{n \ln \ln n}{\ln n} + O(\frac{n}{\ln n})} 
\end{equation*}
Evidently, in the case of 2PerFA,
the cost of transformation to one-way is substantially reduced
(with the exponent divided by two).

The transformation complexity in these three cases
is compared for small values of $n$ in Table~\ref{t:F_vs_Kapoutsis}.

\begin{table}[h]
\begin{center}
\begin{tabular}{r@{\;\;}|@{\;\;}r@{\;\;}|@{\;\;}r@{\;\;}|@{\;\;}r@{\;\;}|}
$n$ & $F(n)$ & $\max\limits_{k=1}^n k^{n-k+1} + 1$ & $n(n^n - (n-1)^n) + 1$ \\
& \textbf{(2PerFA to 1DFA)} & (sweeping to 1DFA) & (2DFA to 1DFA) \\
\hline
1   & 1 & 2 & 2 \\
2   & 2 & 3 & 7 \\
3   & 3 & 5 & 58 \\
4   & 6 & 10 & 701 \\
5   & 12	& 28	& 10506 \\
6   & 24	& 82	& 186187 \\
7   & 72	& 257   & 3805250 \\
8   & 180   & 1025  & 88099321 \\
9   & 480   & 4097  & 2278824850 \\
10  & 1440  & 16385 & 65132155991 \\
11  & 3600  & 78126 & 2038428376722 \\
12  & 12600 & 390626	& 69332064858421 \\
\hline
\end{tabular}
\end{center}
\caption{The value of $F(n)$ compared to the known transformations for irreversible 2DFA, for small values of $n$.}
\label{t:F_vs_Kapoutsis}
\end{table}

\section{A more general definition}\label{section_more_general}

Consider a variant of the definition of a 2PerFA, in which acceptance is also allowed at the left end-marker in states from $Q_-$. It entails that in a transformation from 2PerFA to 1PerFA in each state $(q, f)$ the function $f$ operates from $Q_-$ to $Q_+ \cup \{\text{ACC}, \text{REJ} \}$. Upon reading a string $u \in \Sigma^*$, the 1PerFA comes to a state $(q, f)$, where $q \in Q_+$ is the state in which the 2PerFA first moves to the right from the last symbol of ${\vdash}u$, and for every state $r \in Q_-$ and $p \in Q_+$, if $f(r) = p$, then the 2PerFA after reading ${\vdash}u$ starting at its last symbol in the state $r$ finishes in the state $p$. If $f(r) = \text{REJ}$ then, after reading ${\vdash}u$ from right to left starting in $r$, the 2PerFA rejects at the left end-marker. And if $f(r) = \text{ACC}$, then, after reading ${\vdash}u$ from right to left starting in $r$ the 2PerFA accepts at the left end-marker. The transition function $\delta$, the initial state and the set of accepting states will be defined similarly to Lemma~\ref{2PerFA_to_1PerFA_lemma}. The automaton constructed by this transformation will be a permutation automaton. To prove this claim, $\delta$ is first shown to be injective, and then bijectivity follows from the equality of its domain and range.

As in the proof of Lemma~\ref{2PerFA_to_1PerFA_lemma}, suppose that $\delta$ is not injective, and has the same value on two different states:
\begin{equation*}
	\delta((q_1, f_1)) = \delta((q_2, f_2))
\end{equation*}
\begin{equation*}
	(\delta_a^+(q_1), \delta_a^+ \circ f_1 \circ \delta_a^-) = (\delta_a^+(q_2), \delta_a^+ \circ f_2 \circ \delta_a^-)
\end{equation*}

From
\begin{equation*}
	\delta_a^+(q_1) = \delta_a^+(q_2)
\end{equation*}
follows
\begin{equation*}
	q_1 = q_2
\end{equation*}
as $\delta_a^+$ is a bijection.
And from
\begin{equation*}
	\delta_a^+ \circ f_1 \circ \delta_a^- = \delta_a^+ \circ f_2 \circ \delta_a^-
\end{equation*}
by multiplying by inverse functions of $(\delta_a^+)^{-1}, (\delta_a^-)^{-1}$ from the left side and from the right side respectively, the next equation follows
\begin{equation*}
	(\delta_a^+)^{-1} \circ \delta_a^+ \circ f_1 \circ \delta_a^- \circ (\delta_a^-)^{-1} = (\delta_a^+)^{-1} \circ \delta_a^+ \circ f_2 \circ \delta_a^- \circ (\delta_a^-)^{-1}
\end{equation*}
\begin{equation*}
	f_1 = f_2
\end{equation*}

So, $(q_1, f_1) = (q_2, f_2)$, therefore $\delta$ is a bijection and the constructed automaton is a permutation automaton.

Denote the number of accepting states in $Q_-$ by $e$. The exact number of states in the constructed 1PerFA is given in the following theorem.

\begin{theorem}\label{general_definition_upper_bound_theorem}
For every 2PerFA $\mathcal{A} = (\Sigma, Q_+, Q_-, q_0, \langle \delta_a^+ \rangle_{a \in \Sigma}, \langle \delta_a^- \rangle_{a \in \Sigma}, \delta_{\vdash}, \delta_{\dashv}, F)$ with $F \subseteq Q_+ \cup Q_-$ and
$$|Q_+| = k,\quad |Q_-| = \ell,\quad |Q_-^{\times}| = m,\quad |F \cap Q_-| = e,$$
where $Q_-^{\times} \subseteq Q_-$~is the set of rejecting states from which there is no transition by $\vdash$, there exists a 1PerFA with at most
\begin{equation*}
k \cdot {\ell \choose m} \cdot {m \choose e} \cdot {k - 1 \choose \ell - m} \cdot (\ell - m)!
\end{equation*}
states that recognizes the same language.
\end{theorem}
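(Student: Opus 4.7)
The plan is to mirror the structure of the proofs of Lemma~\ref{2PerFA_to_1PerFA_lemma} and Theorem~\ref{upper_bound_theorem}, with the difference that the partial function $f \colon Q_- \to Q_+$ is replaced by a total function $f \colon Q_- \to Q_+ \cup \{\mathrm{ACC}, \mathrm{REJ}\}$, so that the ``undefined'' cases now split into two sub-cases according to whether the backward computation on ${\vdash}u$ would eventually accept at the left end-marker or reject there. The construction of the 1PerFA $\mathcal{B}$ (its initial state, transition function, and correctness of simulation) is already treated in the discussion preceding the theorem, together with the verification that $\widetilde{\delta}_a$ is a bijection on the state set. What remains is to identify precisely which pairs $(q,f)$ can arise as reachable states, and to count them.

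First I would prove by induction on the length of the input the following invariants on every reachable state $(q, f)$:
\begin{itemize}
\item $q \in Q_+$;
\item $f^{-1}(\{\mathrm{ACC}\})$ has exactly $e$ elements and $f^{-1}(\{\mathrm{REJ}\})$ has exactly $m - e$ elements, so that $f^{-1}(Q_+)$ has exactly $\ell - m$ elements;
\item the restriction $f|_{f^{-1}(Q_+)}$ is injective into $Q_+$;
\item $q \notin \mathrm{Im}\, f \cap Q_+$.
\end{itemize}
The base case is immediate from the initial state $\widetilde{q}_0 = (\delta_{\vdash}(q_0), g)$, where $g(r) = \delta_{\vdash}(r)$ if $\delta_{\vdash}$ is defined on $r$, $g(r) = \mathrm{ACC}$ if $r \in F \cap Q_-$, and $g(r) = \mathrm{REJ}$ otherwise; the disjointness of $\{q_0\}$ and $Q_-$ and the injectivity of $\delta_{\vdash}$ give the required properties, and the three sub-classes of $Q_-$ have sizes $\ell - m$, $e$, and $m - e$ respectively by hypothesis. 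The induction step treats $\widetilde{\delta}_a((q,f)) = (\delta_a^+(q), \widehat{\delta}_a^+ \circ f \circ \delta_a^-)$, where $\widehat{\delta}_a^+$ extends $\delta_a^+$ by fixing $\mathrm{ACC}$ and $\mathrm{REJ}$; since $\delta_a^+$ and $\delta_a^-$ are bijections, the preimages of $\mathrm{ACC}$, $\mathrm{REJ}$, and $Q_+$ are merely permuted in size-preserving fashion, injectivity on $f^{-1}(Q_+)$ is preserved, and $\delta_a^+(q) \notin \mathrm{Im}(\widehat{\delta}_a^+ \circ f) \cap Q_+$ follows from the bijectivity of $\delta_a^+$ exactly as in Claim~\ref{reachable_states_claim}.

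Next I would count the pairs $(q, f)$ satisfying these invariants, which by the previous step gives an upper bound on the number of reachable states of $\mathcal{B}$. Fix $q \in Q_+$ in $k$ ways. Choose the set of $m$ states in $Q_-$ that are mapped to $\{\mathrm{ACC}, \mathrm{REJ}\}$ in $\binom{\ell}{m}$ ways, then choose which $e$ of these go to $\mathrm{ACC}$ in $\binom{m}{e}$ ways. For the remaining $\ell - m$ states in $Q_-$, choose their image set inside $Q_+ \setminus \{q\}$ in $\binom{k-1}{\ell-m}$ ways, and choose a bijection onto that set in $(\ell - m)!$ ways. Multiplying yields $k \cdot \binom{\ell}{m} \cdot \binom{m}{e} \cdot \binom{k-1}{\ell-m} \cdot (\ell-m)!$, which is the claimed bound.

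Since the only essentially new ingredient is splitting the $m$ ``absorbing'' arguments of $f$ into the two categories $\mathrm{ACC}$ and $\mathrm{REJ}$, the main technical point (and likely sole obstacle) is checking that the invariants really are preserved by the transitions, that is, that $\widehat{\delta}_a^+$ indeed fixes $\mathrm{ACC}$ and $\mathrm{REJ}$ and therefore preserves both the counts $e$ and $m - e$. Once this bookkeeping is in place, everything else is a direct adaptation of the argument in Theorem~\ref{upper_bound_theorem}.
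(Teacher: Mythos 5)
Your proposal is correct and follows essentially the same route as the paper, which only sketches this result: the function $f$ is extended to take values in $Q_+\cup\{\mathrm{ACC},\mathrm{REJ}\}$, the invariants of Claim~\ref{reachable_states_claim} are adapted, and the extra factor $\binom{m}{e}$ comes from choosing which of the $m$ states not mapped into $Q_+$ carry the value $\mathrm{ACC}$. Your reading of $m$ as the total number of states of $Q_-$ without a transition by $\vdash$ (of which $e$ are accepting) is the one consistent with the stated formula, and the bookkeeping you identify as the only new ingredient is indeed all that is needed.
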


\section{Conclusion}

The complexity of transforming sweeping permutation automata (2PerFA)
to classical one-way permutation automata (1PerFA) has been determined precisely.
A suggested question for future research is the state complexity of operations on 2PerFA.
Indeed, state complexity of operations on 1PerFA
has recently been investigated~\cite{HospodarMlynarcik,RauchHolzer},
state complexity of operations on 2DFA of the general form
was studied as well~\cite{JiraskovaOkhotin_2dfa},
and it would be interesting to know how the case of 2PerFA compares to these related models.

\end{document}